\begin{document}

\title{Public Communication can Facilitate Low-Risk Coordination under Surveillance}

\author{Amos Korman}\thanks{Contact author: amos.korman@irif.fr}
\email{amos.korman@irif.fr}
\affiliation{%
  \institution{CNRS, IRIF, Universit\'e de Paris}
  \city{75013 Paris}
  \country{France}
}

\author{Pierluigi Crescenzi}
\email{pierluigi.crescenzi@gssi.it}
\affiliation{%
  \institution{Gran Sasso Science Institute}
  \city{67100 L'Aquila}
  \country{Italy}
}

\begin{abstract}
\textbf{Abstract.} Consider a sub-population of rebels that wish to initiate a revolution. In order to avoid initializing a failed revolution, rebels would first strive to estimate their relative "power", which is often correlated with their fraction in the population. However, and especially in non-democratic countries, rebels refrain from disclosing themselves. This poses a significant challenge for rebels: estimating their fraction in the population while minimizing the risk of being identified as rebels. This paper introduces a distributed computing framework aiming to study this question. Our main takeaway message is that the communication pattern has a crucial role in achieving such a task. Specifically, we show that relying on the inherent noise in the communication, \emph{public communication}, characterized by the fact that each message announced by an individual can be viewed by all its neighbors, allows rebels to estimate their fraction in the population while keeping a negligible risk of each rebel being identified as such. The suggested estimation protocol, inspired by historical events, is extremely simple and can be executed covertly even under extreme conditions of surveillance. Conversely, we show that under peer-to-peer communication, protocols of similar simplicity are either inefficient or non-covert.
\end{abstract}

\maketitle

\section{Introduction}

Large scale changeovers in a population, such as political overthrows of dictatorships by rebels, are often perceived as complex emergent phenomena \cite{gurr2015men,howard2013democracy,siani2005romanian,dinges2005condor}. In order to avoid conducting a failed rebellion, rebels would typically refrain from initiating a revolution until they manage to obtain a reliable indication that many others will join them \cite{centola2018experimental,Popovic2015}. In other words, rebels would first try to understand whether ``\textit{we} are the \textit{many} and \textit{they} are the \textit{few}''~\cite{Shelley1832}. However, and especially under non-democratic regimes, obtaining such information may not be a trivial task, as rebels often refrain from disclosing themselves as such. In turn, with the lack of such information, the status quo may continue to hold even when the support for a revolution is effectively high. Nevertheless, in some cases, rebels successfully coordinate their actions even under severe conditions of surveillance. 

Drawing on distributed computing reasoning, we argue that the pattern of communication can play a key role in the ability of rebels to reach such a coordination safely. We distinguish between two extreme patterns in a network environment:
\begin{itemize}
\item {\em Private communication}: Each message sent by an agent is heard by a single neighbor.

\item {\em Public communication}: Each message sent by an agent is heard by all its neighbors. 
\end{itemize}
Private communication aims to capture one-to-one interactions, which are executed either directly in person or via a digital private messaging platform. In contrast, public communication aims to model social media infrastructures, such as Facebook or Twitter, that played a significant role during the Arab spring revolutions \cite{howard2013democracy}, behavioral communication, such as the ``slow-motion'' day organized by the opponents of the Pinochet dictatorship in the 1970s~\cite{Popovic2015}, crowd assemblies, such as the one gathered during the last public speech by Ceau\cb{s}escu in 1989 \cite{siani2005romanian}, or even chemical communication during quorum sensing by pathogenic bacteria before attacking their host \cite{quorum2}. The later two examples are discussed in more details in Section \ref{sec:discussio}.

The parallel nature of public communication can allow for fast information spread. This is commonly considered a fundamental property that explains how the use of social media helped accelerate several social movements \cite{howard2013democracy}. Here we argue that from the perspective of rebels, there is another significant benefit of public communication: It not only facilitates fast information spread but also allows for its covert dissemination. More specifically, we argue that public communication facilitates the ability of agents belonging to a sub-population to estimate their proportion in the population without revealing that they belong to the sub-population.

To obtain some intuition regarding the difficulties involved in the rebels estimating their proportion in the population without revealing that they are rebels, let us consider a simple setting with two communicating agents: Alice and Bob. A possible scenario can be the following. Being a rebel, Alice would try to understand whether Bob is also a rebel. While talking normally, she could start by ``cautiously tempting'' Bob into sending ``rebellious'' signals. If Bob would be a rebel, then, in turn, he may ``cautiously respond'' to Alice by sending some, but not too many, rebellious signals, and, in parallel, try to ``cautiously tempt'' Alice into doing the same. At the end of the conversation, each person would classify the other as a rebel if the (weighted) number of rebellious signals he or she received passes a certain threshold. Unfortunately, in such a scenario, unless employing some sophisticated cryptographic protocol (which is highly unlikely in direct communication between humans), there is little hope for rebels if the police are surveilling all conversations. Indeed, if both rebels could detect that the other person is a rebel, e.g., by counting the weighted rebellious signals he or she sends, then so could the police. In fact, the same argument holds with respect to any evaluation criterion used by one of the parties that takes as input only the conversation between the parties (and not, for example, some random private key generated by a party before the execution starts as could be done using cryptographic techniques \cite{yao1982protocols}). However, these difficulties do not rule out the possibility that simple covert estimation mechanisms exist in a multi-party scenario. Indeed, in contrast to the two-party scenario, understanding that there are many rebels in a large population does not necessarily imply that one can identify who they are.

\subsection{Model}

This paper introduces a distributed computing framework that aims to study covert population-size estimation by humans or other biological entities. For this purpose, we give special attention to simplicity, in both message encoding, and decoding. In particular, we assume that messages are real numbers that, in the context of revolution, represent a certain level of satisfaction from the ruling entity. In turn, the decoding is assumed to be threshold-based, capturing the ability to sense a certain tendency. These assumptions are in contrast to cryptographic schemes, that are based, for example, on first generating a huge random number and then manipulating it in a sophisticated manner \cite{yao1982protocols}. 

Formally, we consider an idealized model consisting of $n$ agents, communicating over a network $G$, where the nodes  represent the agents and the edges represent communication links between neighbors. The {\em degree} of an agent $i$, denoted $\Delta_i$, is the number of neighbors of $i$ in $G$. Let $\Delta$ denote the median degree.

Initially, each agent is chosen as a {\em rebel} with probability $0\leq \rho\leq 1$, independently of others. Otherwise, it is an {\em obedient citizen}.  A priori, the  behavior type of an agent is known to itself but not to others. The parameter $\rho$, which can be considered as the fraction of rebels in the population, is unknown to the agents. We say that there are ``many rebels'' if $\rho \geq 0.8$, and ``few rebels'' if $\rho \leq 0.2$, where it should be clear that the constants $0.8$ are $0.2$ are  arbitrarily and any other constants $1>c_1>c_2>0$ could have been used instead. Informally, the aim of the rebels is to distinguish the case of many rebels from the case of a few rebels while minimizing the risk of disclosing themselves as rebels.

As mentioned, we distinguish between public and private communication models. In either case, we assume that communication consists of a single round, in which agents exchange messages in parallel, so that each agent has access to (a distorted version of) a message sent by each of its neighbors. The restriction to one round is made for the sake of simplicity of definitions, however, it should be clear that our framework can be extended to multiple rounds. 

Each message is modeled as a real number, which may encode information about the level of satisfaction with the ruling entity. For simplicity and normalization, we assume that the messages sent by obedient citizens are always equal to $0$. In contrast, a rebel may freely choose the messages it sends. Unless mentioned otherwise, we consider only deterministic protocols. Moreover, when considering the private communication model, we restrict attention to {\em uniform} protocols, in which the same message is sent to all neighbors. Such protocols are natural analogs of protocols in the public communication model. Hence, in both private and public models, a rebel deterministically decides on  a single message $m$ to be delivered to  its neighbors. The difference between the models, is that in the private model an agent $i$ actively sends $\Delta_i$ copies of $m$ (one copy per neighbor), whereas in the public model it only announces the message $m$ once, and then this message is heard by all its neighbors. 

Importantly, in both models, the receiver of a message may not interpret the corresponding information correctly. To capture this, we assume that every message $m_{i,j}$, originated at agent $i$ and heard by $j$, is received by $j$ as 
\[s_{i,j} = m_{i,j} +N(0,1),\]
where $N(0,1)$ is a normally distributed variable. This noise variable is sampled for each neighbor $j$ of agent $i$, independently from all other neighbors of $i$. In order to avoid confusion between a message $m_{i,j}$ and its distorted version $s_{i,j}$, we refer to the latter as a {\em signal}. 

At the end of the communication round, after receiving the signals from all its neighbors, each rebel either outputs ``many'', or does not output anything. The {\em success probability} of a rebel protocol, denoted $p_{\mathrm{success}}$, is the probability that at least a third of the rebels output ``many'' when $\rho\geq 0.8$ (as before, the constant $1/3$ is arbitrary). The {\em output-risk} of a rebel, denoted $r_{\mathrm{output}}$, is defined as the probability that it outputs ``many'', given that $\rho\leq 0.2$. This captures the risk of having a false positive. 

Another component of the risk corresponds to being identified as a rebel as a result of sending too suspicious messages. This risk depends not only on the messages sent by a rebel but also on the abilities of the surveilling entity, called {\em police}, and on the criteria it uses to identify rebels. Aiming to capture extreme conditions associated with totalitarian countries, we assume that the police surveils all communication links. However, for fairness considerations, similarly to the agents in the system, the police cannot see the actual messages sent and instead sees their corresponding signals. That is, every message $m$ sent by an agent is seen by the police as a signal $s = m +N(0,1)$, where the sample noise is independent of all other events.  For each agent $i$, the police protocol considers the signals associated with all the messages  $i$ sends and then decides whether or not to {\em arrest} the agent. Note that under public communication, the police receives one signal from each agent $i$ (since it announces one message only), whereas in the private communication model it receives $\Delta_i$ such signals. 

The police's goal is to arrest as many rebels as possible while minimizing arrests of obedient citizens. Being permissive with respect to the police, we assume that it's computational power is unlimited, and that it knows both the rebel protocol and the fraction of rebels $\rho$. Conversely, being restrictive with respect to rebels, we assume that rebels do not know the police's protocol, and must guarantee low risk with respect to any police protocol. Specifically, the \textit{relative message-risk} of a rebel protocol, denoted $r_{\mathrm{message}}$, is defined as the maximal difference between the probability that a rebel is arrested by the police and the probability that an obedient citizen is arrested, where the maximum is taken over all police protocols. Finally, the {\em total risk} of a rebel is the output-risk plus the relative message-risk:
\[r_{\mathrm{total}}=r_{\mathrm{output}}+r_{\mathrm{message}}.\]

To illustrate the definitions, let us briefly discuss two trivial protocols operating under the public communication model. In the first protocol each rebel outputs ``many'' regardless of the messages it receives. There, the success probability is extremely high, namely 1, but so is the output-risk. A second trivial protocol imitates the behavior of obedient citizens by letting each rebel announce the message zero. This protocol has relative message-risk of zero, but regardless of its outputting rule (concerning when to output ``many''), it cannot maintain both high success probability and low output-risk. 

In summary, the goals of the rebels are (1) to maximize success probability, and (2) to minimize the total risk.

\subsection{Our results}
This paper introduces a distributed computing framework that aims to study covert computations by humans or other biological entities. Our focus is on the ability of a sub-population of agents (rebels) to perform very simple computations to estimate their fraction in the population, while minimizing the risk of  exposing the fact that they belong to the sub-population.  The model assumes that other agents (obedient citizens) simply send the number zero to their neighbors, which in turn receive a distorted version of this number due to noise. Our main takeaway message is that even under extreme surveillance conditions, there are simple deterministic protocols in the public communication model that allow rebels to estimate their fraction in the population while keeping a negligible risk of each rebel being identified as such. Conversely, we show that under a peer-to-peer analogue, protocols of similar simplicity are either inefficient or non-covert. We next describe our results in more details.

\subsection*{The {\em Quorum-Sensing} protocol}

We first consider an extremely simple rebel protocol, termed {\em Quorum-Sensing}, which is particularly useful when executed in the public communication model (see Figure~\ref{fig:quorumsensing}). The protocol is inspired by historical events that happened during the Pinochet dictatorship in the 1970s. The rebels opposing Pinochet used the idea of suggesting to act slowly, for example, that taxi drivers would drive slower than usual. The message spread rapidly, and many people cooperated in this initiative~\cite{Popovic2015}. Watching the city's low motion, the rebels could realize that they were many without incurring considerable risk.

Formally, in this protocol, each rebel simply sends the message~$m=\epsilon$, for some predetermined parameter $\epsilon>0$. At the end of the communication round, a rebel outputs ``many'' if and only if (1) its degree is at least the median degree $\Delta$, and (2) the average value of a signal received from a neighbor is at least $\epsilon/2$. \begin{figure*}[h]
\centering
\includegraphics[scale=0.15]{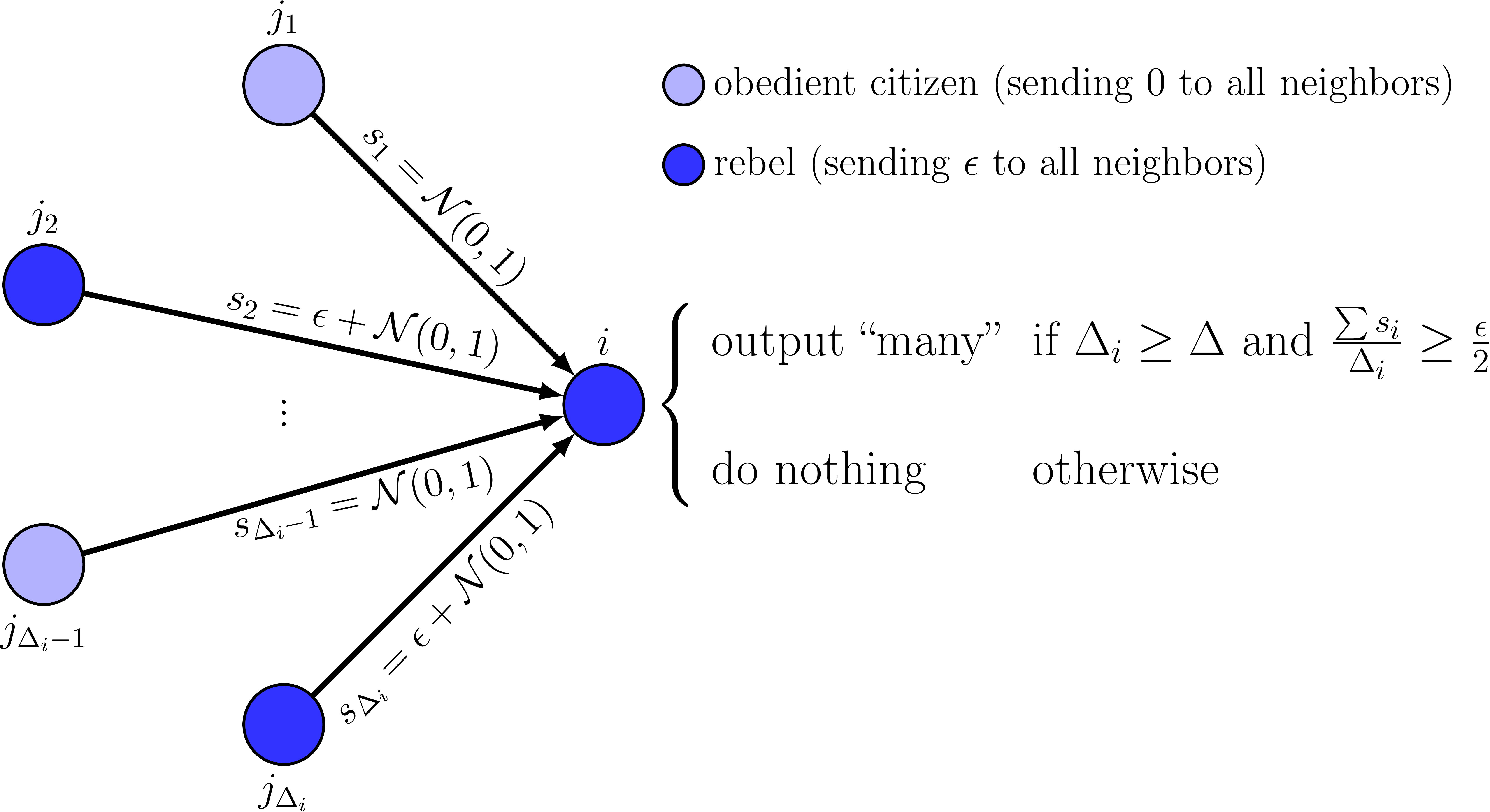}
\caption{\textbf{The Quorum-Sensing protocol with parameter $\epsilon$.} 
Rebels send the number $\epsilon>0$ to all their neighbors, and obedient citizens send the number 0. Rebel $i$ receives signals from its $\Delta_i$ neighbors (a combination of rebels and obedient citizens), 
and decides whether to output ``many'' according to the rule specified on the right ($\Delta$ denotes the median degree).}
\label{fig:quorumsensing}
\end{figure*}

The next theorem states that under public communication, $\epsilon$ can be set to be sufficiently small to guarantee that the total risk incurred by this protocol is arbitrarily small, while still maintaining extremely high success probability on highly connected networks.  

\begin{theorem}\label{thm:success}
Consider the public communication model and a network with median degree $\Delta$. For any $\epsilon>0$ and for $n$ and $\Delta$ sufficiently large, the success probability of the Quorum-Sensing protocol is at least $1-O(1/n^2)$, while the total risk is at most $0.715\epsilon$.
\end{theorem}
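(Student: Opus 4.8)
\emph{Success probability.} The plan is to fix $\rho\geq 0.8$ and let $H=\{i:\Delta_i\geq\Delta\}$, so $|H|\geq n/2$ by definition of the median degree. For a fixed rebel $i\in H$, conditioning on $i$ being a rebel leaves the types of its $\Delta_i\geq\Delta$ neighbours i.i.d.\ $\mathrm{Bernoulli}(\rho)$, so the average of the signals it receives equals $\epsilon\widehat\rho_i+\overline N_i$, where $\widehat\rho_i$ is the empirical rebel-fraction among its neighbours (mean $\rho\geq 0.8$) and $\overline N_i\sim N(0,1/\Delta_i)$. Since the decoding threshold $\epsilon/2$ sits a constant factor below $0.8\epsilon$, Hoeffding's inequality for $\widehat\rho_i$ together with a Gaussian tail bound for $\overline N_i$ gives that $i$ fails to output ``many'' with probability at most $\exp\!\big(-c\min\{1,\epsilon^2\}\,\Delta\big)$ for a universal $c>0$. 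Taking $\Delta$ of order $\epsilon^{-2}\log n$ makes this at most $n^{-3}$, so a union bound over the at most $n$ rebels of $H$ shows that, except with probability $O(n^{-2})$, every rebel of $H$ outputs ``many''. Two further Chernoff bounds show that, except with probability $O(n^{-2})$, the total number of rebels is at most $\rho n+O(\sqrt{n\log n})$ while the number of rebels in $H$ is at least $\rho|H|-O(\sqrt{n\log n})\geq \rho n/2-O(\sqrt{n\log n})$, which exceeds $\tfrac13\rho n$ for $n$ large; hence at least a third of all rebels lie in $H$ and output ``many''. (The event that fewer than $n/2$ agents are rebels has probability $e^{-\Omega(n)}$ and is absorbed.) Intersecting these events yields $p_{\mathrm{success}}\geq 1-O(n^{-2})$.

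\emph{Output-risk.} Here I would fix $\rho\leq 0.2$ and a rebel $i$: if $\Delta_i<\Delta$ it never outputs ``many'', and otherwise its average received signal is again $\epsilon\widehat\rho_i+\overline N_i$, now with mean $\rho\epsilon\leq 0.2\epsilon$, a constant factor below the threshold $\epsilon/2$. The same two tail bounds give $r_{\mathrm{output}}\leq\exp\!\big(-c'\min\{1,\epsilon^2\}\,\Delta\big)$, which is $o(\epsilon)$ as $\Delta\to\infty$; in particular it is at most $0.315\,\epsilon$ once $\Delta$ is large enough in terms of $\epsilon$, a requirement already implied by $\Delta\gtrsim\epsilon^{-2}\log n$ when $n$ is large.

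\emph{Relative message-risk.} This is the step I would think about most carefully, and the only one that uses publicness. Under public communication the police receives exactly one signal per agent, distributed as $N(\epsilon,1)$ for a rebel and as $N(0,1)$ for an obedient citizen, and these signals are mutually independent given the agents' types. Consequently the signals of agents other than $i$ are independent of the type of $i$ (the only extra parameter, $\rho$, being already known to the police), so without loss of generality the police's arrest rule for $i$ is a (possibly randomised) test of a single real observation. By the Neyman--Pearson lemma the maximal gap between the acceptance probabilities of such a test under $N(\epsilon,1)$ and under $N(0,1)$ is exactly the total variation distance $d_{\mathrm{TV}}\big(N(\epsilon,1),N(0,1)\big)=2\Phi(\epsilon/2)-1=\Pr\big[\,|Z|\leq\epsilon/2\,\big]$, where $\Phi$ is the standard normal c.d.f.\ and $Z\sim N(0,1)$; bounding the integral by $\epsilon\,(2\pi)^{-1/2}<0.4\,\epsilon$ gives $r_{\mathrm{message}}<0.4\,\epsilon$. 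Adding the two contributions, $r_{\mathrm{total}}=r_{\mathrm{output}}+r_{\mathrm{message}}<0.315\,\epsilon+0.4\,\epsilon=0.715\,\epsilon$ once $n$ and $\Delta$ are large enough, which is the claim.

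\emph{Where the difficulty lies.} I expect no single hard idea: the real work is bookkeeping, namely arranging that all the concentration events in the success-probability argument fail simultaneously with probability only $O(n^{-2})$, while keeping transparent how large $\Delta$ must be taken as a function of $\epsilon$ (and $n$) for all the estimates---success, output-risk, and the size requirement $\Delta\gtrsim\epsilon^{-2}\log n$---to hold at once. The genuinely model-dependent observation, that under public communication a rebel exposes the police to only one Gaussian observation so that the relative message-risk collapses to a Gaussian total variation distance, is short but must be argued with care, since it is precisely what fails under private communication.
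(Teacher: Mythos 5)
Your proposal is correct and follows the same overall architecture as the paper's proof: Chernoff-type concentration for the output-risk and the success probability, plus a Neyman--Pearson / total-variation argument for the relative message-risk under public communication. Two local differences are worth recording. First, the paper does not compute the total variation distance between $N(0,1)$ and $N(\epsilon,1)$ exactly; it bounds it via the Kullback--Leibler divergence and Pinsker's inequality, getting $\mathrm{TV}\leq\sqrt{\mathrm{KL}}=\epsilon/\sqrt{2}\approx 0.707\epsilon$, and the constant $0.715$ in the theorem is precisely $1/\sqrt{2}$ plus a small allowance that absorbs the exponentially small output-risk. Your exact evaluation $\mathrm{TV}=2\Phi(\epsilon/2)-1=\Pr[|Z|\leq\epsilon/2]\leq\epsilon/\sqrt{2\pi}<0.4\epsilon$ is tighter and would in fact support a stronger statement (total risk below, say, $0.41\epsilon$); your split $0.315\epsilon+0.4\epsilon$ is just one way of landing on the stated constant. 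Second, the paper asserts that the average received signal $\bar s$ is drawn from $N(\rho\epsilon,1/\Delta_u)$, which is literally only the conditional law given that exactly a $\rho$-fraction of the neighbours are rebels; the true law is a binomial mixture of Gaussians. Your decomposition $\bar s=\epsilon\widehat\rho_i+\overline N_i$, with Hoeffding applied to the empirical rebel fraction and a Gaussian tail bound applied to the averaged noise, treats this fluctuation explicitly and is the more careful rendering of the same estimate; both routes give $e^{-\Theta(\epsilon^2\Delta)}$ tails and the same conclusion once $\Delta\gtrsim\epsilon^{-2}\log n$. The remaining bookkeeping (union bounds, and the comparison of the number of high-degree rebels to a third of all rebels) matches the paper's argument.
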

The formal proof of the theorem is given in Section \ref{sec:success}. Intuitively, the reasoning behind the proof is as follows. In public communication, each agent sends only one message. If this message is sufficiently close to what an obedient citizen sends, i.e., close to 0, then the rebel can hide behind the noise (see Lemma \ref{thm:safety}). On the other hand, when observing the signals coming from many agents, a small bias in the original messages of many rebels becomes visible, due to the law of large numbers that effectively cancels noise. 

We corroborated this result by conducting simulations of the Quorum-Sensing protocol over a real-world social network. We considered the Facebook graph released in~\cite{GjokaKBM10,GjokaKBM11}, which was collected in April 2009, containing a sample of approximately 1.2 million users reached by one breadth-first-search traversal. The results are shown in Figure~2.

\begin{figure*}[h]
\centering
\includegraphics[width=.9\linewidth]{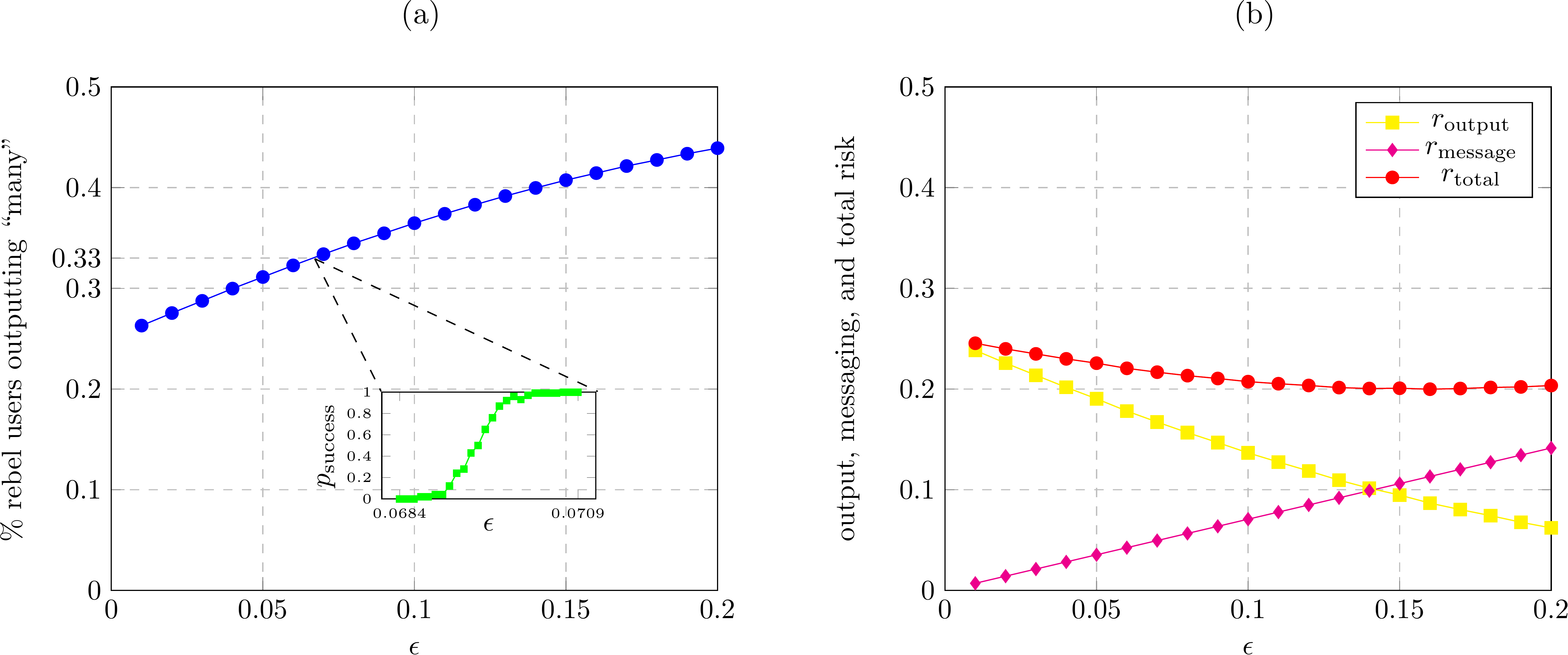}
\caption{\textbf{Public communication: Simulations of the Quorum-Sensing protocol on a Facebook network.} The percentage of rebels which output ``many'' (a) and the risk of a rebel (b) as a function of the protocol parameter $\epsilon$ in the case of a Facebook sub-network. The inset in (a) represents the success probability in the interval between $0.0684$ and $0.0709$. From (b) we can conclude that the minimum total risk $r_\mathrm{total}$ is approximately $0.2$ and is obtained at roughly $\epsilon=0.16$, which corresponds to a success probability of almost $1$.}
\label{fig:facebook-public}
\end{figure*}

\subsection*{Robustness to undercover agents}

In some scenarios, {\em undercover agents} may secretly cooperate with the police. In the context of our model, such agents aim to distort the detection protocol of the rebels to reduce the rebels' success probability or to increase the rebel's output-risk. For example, the Quorum-Sensing protocol's correctness is very sensitive to undercover agents. In the case of a complete network, e.g., even a single undercover agent can diminish the correctness of the Quorum-Sensing protocol by sending a message consisting of a huge number. 

We next present a variant of the Quorum-Sensing protocol,  called {\em Median}, which is robust to a non-negligible fraction of undercover agents. The protocol uses the same messaging protocol as the Quorum-Sensing protocol, i.e., it deterministically sends the message $\epsilon$. Moreover, similarly to the Quorum-Sensing protocol, a rebel $i$ outputs nothing if its degree is small, i.e., if $\Delta_i<\Delta$. Otherwise, it counts the number of incoming signals that are above $\epsilon$, and outputs ``many'' if and only if the number of such signals exceeds $(\frac{1}{2}-\frac{7\epsilon}{30})\Delta_i$. 

The next theorem quantifies the robustness of the Median protocol to the presence of undercover agents. It suggests that for a range of relatively small $\epsilon$, the Median protocol yields similar guarantees as the Quorum-Sensing protocol, even when facing a small fraction of undercover agents. The proof of the theorem is given in Section \ref{sec:median}.

\begin{theorem}\label{thm:median}
Assume that the probability that an agent is undercover is $o(1)$. If $\epsilon\in[0.04,0.2]$, then for $\Delta$ sufficiently large, the success probability of the Median protocol is at least $1-O(1/n^2)$, and the total risk is at most $0.715\epsilon$.
\end{theorem}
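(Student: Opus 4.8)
The proof splits the total risk as $r_{\mathrm{total}}=r_{\mathrm{message}}+r_{\mathrm{output}}$. Since the Median protocol sends exactly the messages of the Quorum-Sensing protocol, the police observes exactly the same signals, so by Lemma~\ref{thm:safety} the quantity $r_{\mathrm{message}}$ is bounded by the same expression as before — it is essentially the total-variation distance between $N(\epsilon,1)$ and $N(0,1)$, which is below $\epsilon/\sqrt{2\pi}<0.4\epsilon$ and in particular unaffected by undercover agents (who collaborate with the police rather than being screened). Hence the whole burden is to show that (i) $r_{\mathrm{output}}$ is negligible, so the sum still fits under $0.715\epsilon$, and (ii) the success guarantee survives, both now in the presence of an $o(1)$ fraction of adversarial undercover neighbors. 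For both I would fix a rebel $i$ with $\Delta_i\ge\Delta$ (rebels of smaller degree output nothing and are irrelevant) and study $X_i$, the number of incoming signals exceeding $\epsilon$, written $X_i=X_i^{\mathrm{good}}+X_i^{\mathrm{uc}}$ according to whether the neighbor is undercover. Each non-undercover neighbor is, independently, a rebel (message $\epsilon$, so its signal beats $\epsilon$ with probability exactly $\tfrac12$) with probability $\rho\pm o(1)$, and otherwise an obedient citizen (message $0$, signal beats $\epsilon$ with probability $\Pr[N(0,1)>\epsilon]=\tfrac12-\Theta(\epsilon)$, which for $\epsilon\le 0.2$ lies in $[\tfrac12-\tfrac{\epsilon}{\sqrt{2\pi}},\,\tfrac12-0.39\epsilon]$).

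\emph{Output-risk ($\rho\le 0.2$).} The adversary maximizes $X_i$ by making every undercover neighbor announce an arbitrarily large number, so $X_i^{\mathrm{uc}}$ becomes the number $U_i$ of undercover neighbors, whose mean is $o(\Delta_i)$; a Chernoff bound gives $U_i\le 0.01\,\epsilon\,\Delta_i$ except with probability $e^{-\Omega(\Delta)}$. Taking $\rho=0.2$ (the worst case, as raising $\rho$ only increases $\mathbb E[X_i^{\mathrm{good}}]$), one gets $\mathbb E[X_i^{\mathrm{good}}]\le\bigl(0.1+0.8(\tfrac12-0.39\epsilon)\bigr)\Delta_i=\bigl(\tfrac12-\Theta(\epsilon)\bigr)\Delta_i$ with drift roughly $0.30\epsilon$; adding $U_i$ this is still below the threshold $(\tfrac12-\tfrac{7\epsilon}{30})\Delta_i$ by a margin $\Theta(\epsilon)\Delta_i$ (here $\tfrac{7}{30}<0.30$ and $\epsilon\le 0.2$ are used, and the $o(1)$ terms are absorbed since $\Delta,n$ are large). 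Hoeffding's inequality then gives $\Pr[X_i>(\tfrac12-\tfrac{7\epsilon}{30})\Delta_i]\le e^{-\Omega(\Delta\epsilon^2)}$, so $r_{\mathrm{output}}\le e^{-\Omega(\Delta\epsilon^2)}$, which since $\epsilon\ge 0.04$ is at most $(0.715-0.4)\epsilon$ once $\Delta$ exceeds a constant; hence $r_{\mathrm{total}}\le 0.715\epsilon$.

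\emph{Success ($\rho\ge 0.8$).} Symmetrically, the adversary minimizes $X_i$ by making undercover neighbors announce arbitrarily small numbers, so we only use $X_i^{\mathrm{uc}}\ge 0$. With $\rho=0.8$ (worst case), $\mathbb E[X_i^{\mathrm{good}}]\ge(1-o(1))\bigl(0.4+0.2(\tfrac12-\tfrac{\epsilon}{\sqrt{2\pi}})\bigr)\Delta_i=(1-o(1))\bigl(\tfrac12-\Theta(\epsilon)\bigr)\Delta_i$ with drift now only about $0.08\epsilon$; since $\tfrac{7}{30}>0.08$, this exceeds the threshold by a margin $\Theta(\epsilon)\Delta_i$, and Hoeffding gives $\Pr[i\text{ does not output ``many''}]\le e^{-\Omega(\Delta\epsilon^2)}$. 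A union bound over the at most $n$ rebels of degree $\ge\Delta$ shows all of them output ``many'' except with probability $n\,e^{-\Omega(\Delta\epsilon^2)}=O(1/n^2)$ once $\Delta=\Omega(\log n/\epsilon^2)$, i.e. for $\Delta$ sufficiently large (recall $\epsilon\ge 0.04$). Finally, as $\Delta$ is the median degree, at least $n/2$ agents have degree $\ge\Delta$, and rebel status is independent of the (fixed) graph, so by a Chernoff bound at least $0.8\cdot\tfrac{n}{2}\cdot(1-o(1))=0.4n(1-o(1))$ of them are rebels except with probability $e^{-\Omega(n)}$; since the total number of rebels is at most $n$, these high-degree rebels already form at least a third of all rebels. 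Combining, at least a third of the rebels output ``many'' with probability $1-O(1/n^2)$.

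The main obstacle is step (i) together with the bookkeeping of constants over the whole window $\epsilon\in[0.04,0.2]$: the threshold slack $\tfrac{7}{30}$ must sit strictly between the $\approx 0.08\epsilon$ success drift and the $\approx 0.30\epsilon$ output-risk drift that the Gaussian produces, and the resulting $\Theta(\epsilon)$ margins must both (a) absorb the $o(1)$ perturbation caused by the undercover agents and (b) turn into concentration bounds that beat $1/n^2$ — which is exactly why $\epsilon$ must be bounded away from $0$ (so that $\Delta\epsilon^2$ is large for ``$\Delta$ sufficiently large'') and bounded by $0.2$ (so that $\tfrac12-\Pr[N(0,1)>\epsilon]$ and the total-variation term stay within their linear approximations).
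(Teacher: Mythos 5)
Your proposal is correct and follows essentially the same route as the paper: count the incoming signals exceeding $\epsilon$, show that the threshold $\frac12-\frac{7\epsilon}{30}$ sits strictly between the expected fraction of high signals under $\rho\le 0.2$ (drift $\approx 0.3\epsilon$ below $\tfrac12$) and under $\rho\ge 0.8$ (drift $\approx 0.08\epsilon$), absorb the undercover agents into an $o(1)$ perturbation, apply a Chernoff/Hoeffding bound plus a union bound, and add the per-message risk from Lemma~\ref{thm:safety}. Your treatment is marginally more explicit than the paper's in two places — you isolate the undercover neighbors' worst-case contribution as a separately concentrated term rather than folding it into the expectation bounds, and you use the exact total-variation distance ($\approx\epsilon/\sqrt{2\pi}$) instead of the Pinsker bound $\epsilon/\sqrt2$ around which the constant $0.715$ was chosen — but these are refinements, not a different argument.
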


The ``o'' notation in the theorem is with respect to the median degree $\Delta$. We further note that the particular constants we are going to use here are not meant to be optimized. Instead, these constants are used for convenience, as they are based on specific bounds on the tail distribution of a normal distribution. 

\subsection*{An Impossibility Result under Private Communication}

In the context of private communication, each rebel executing the Quorum-Sensing protocol (or the Median protocol) would send the number $\epsilon>0$ to each neighbor. Hence, instead of sending just one message $\epsilon$ as in the public communication case, a rebel $i$ now sends $\Delta_i$ such messages. The next theorem states that any such uniform deterministic protocol fails to provide both low total risk and high success probability, regardless of the decision of when to output ``many''. The proof of the theorem appears in Section \ref{sec:impossibility}.

\begin{theorem}\label{lower-independent}\label{thm:impossibility}
Consider a private communication framework in a regular network of degree $\Delta$ of size $n$. Consider any uniform deterministic rebel protocol. Assume that the success probability is bounded away from zero for sufficiently large $n$ and $\Delta$, that is, the success probability is at least $p$, for some constant $p>0$. Then the total risk of a rebel is at least $\frac{p}{4}-\frac{1}{n}$.    
\end{theorem}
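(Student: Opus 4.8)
\medskip
\noindent\textbf{Proof plan.}
The plan is to show that a rebel who can distinguish ``many'' from ``few'' thereby hands the police, essentially for free, a test that distinguishes rebels from obedient citizens; hence any constant success probability forces a comparable total risk.

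I would first record the structure forced by the hypotheses. Since the rebel protocol is deterministic and uniform and the graph is $\Delta$-regular, every rebel sends one fixed number $m$ to each neighbour, every obedient citizen sends $0$, and a rebel's decision is a fixed function $f\colon\mathbb{R}^{\Delta}\to\{0,1\}$ of the $\Delta$ signals it receives (``$1$'' meaning ``output \emph{many}''). Conditioned on a node being a rebel, each of its $\Delta$ neighbours is independently a rebel with probability $\rho$ and then sends $m$, so that node's signal vector is a product of $\Delta$ i.i.d.\ coordinates, each distributed as $\rho\,N(m,1)+(1-\rho)\,N(0,1)$. Write $q_A$ (resp.\ $q_B$) for the probability that $f$ returns $1$ on this vector when $\rho=0.8$ (resp.\ $\rho=0.2$). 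Taking $\rho=0.2$ gives $r_{\mathrm{output}}\ge q_B$. Taking $\rho=0.8$ and letting $R$ be the number of rebels and $W$ the number of rebels that output \emph{many}, linearity of expectation gives $\mathbb{E}[W]=0.8\,n\,q_A\le n\,q_A$; a Chernoff bound gives $\Pr[R<3n/4]\le 1/n$ once $n$ is large enough (as $3/4<0.8$); and on $\{R\ge 3n/4\}$ the event $\{W\ge R/3\}$ is contained in $\{W\ge n/4\}$. Hence by Markov's inequality $p\le \tfrac1n+\Pr[W\ge n/4]\le \tfrac1n+4\,\mathbb{E}[W]/n\le \tfrac1n+4q_A$, so $q_A\ge \tfrac{p}{4}-\tfrac1n$.

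The crux is the inequality $r_{\mathrm{message}}\ge q_A-q_B$, which I would establish by a simulation argument: the police can run the rebels' own decision rule against them. For each agent the police sees $\Delta$ i.i.d.\ signals, distributed as $N(m,1)$ if the agent is a rebel and as $N(0,1)$ if obedient. Let $K$ be the randomized one-coordinate channel that returns its input with probability $0.6$, a fresh $N(m,1)$ sample with probability $0.2$, and a fresh $N(0,1)$ sample with probability $0.2$; a one-line computation gives $K\bigl(N(m,1)\bigr)=0.8\,N(m,1)+0.2\,N(0,1)$ and $K\bigl(N(0,1)\bigr)=0.2\,N(m,1)+0.8\,N(0,1)$. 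Applying $K$ independently to each of the $\Delta$ observed signals therefore yields a vector distributed exactly as a rebel's signal vector when $\rho=0.8$ if the agent is a rebel, and exactly as one when $\rho=0.2$ if the agent is obedient. So the police rule ``arrest the agent iff $f$ returns $1$ on this transformed vector'' arrests a rebel with probability $q_A$ and an obedient citizen with probability $q_B$; since $r_{\mathrm{message}}$ is the supremum of such advantages over all police protocols (and a randomized police protocol can be derandomized by fixing $K$'s coins to their best value), $r_{\mathrm{message}}\ge q_A-q_B$. Combining, $r_{\mathrm{total}}=r_{\mathrm{output}}+r_{\mathrm{message}}\ge q_B+(q_A-q_B)=q_A\ge \tfrac{p}{4}-\tfrac1n$.

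I expect the principal obstacle to be identifying the channel $K$ and verifying that it is a legitimate Markov kernel that \emph{simultaneously} sends the ``rebel'' message law and the ``obedient'' message law to the two mixture laws that arise in a rebel's view; this is possible exactly because the ``many'' threshold $0.8$ exceeds the ``few'' threshold $0.2$, which keeps the ``keep the input'' probability nonnegative. A secondary point that needs care is that regularity together with uniformity make a rebel's received signals a genuine i.i.d.\ product, so that applying $K$ coordinatewise reproduces the rebel's view \emph{exactly}, not merely approximately; the only genuinely probabilistic estimate in the proof is the routine concentration of $R$ around $0.8n$, which is where ``$n$ sufficiently large'' enters.
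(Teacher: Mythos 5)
Your proposal is correct, and it reaches the same bound by the same high-level idea as the paper --- turn the rebels' own decision rule $f$ into a police test applied to an agent's outgoing signals --- but the mechanics of the simulation step are genuinely different. The paper's \emph{Reverse police protocol} feeds $f$ the raw outgoing signals and then couples the police's view to a rebel's view by working at the extreme densities: at $\rho=1$ every incoming and outgoing signal is $N(m,1)$, so the police's view of a rebel is \emph{identically} distributed to a rebel's view, giving arrest probability at least $\hat p_{n,\Delta}\ge p/4$ by the same Markov argument you use; and at $\rho=1/(2n^2)$, conditioning on a rebel all of whose neighbours are obedient, the police's view of an obedient citizen matches that rebel's view, costing a $1/(1-\Delta\rho)$ correction that produces the $-1/n$ term. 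You instead stay at the definitional thresholds $\rho=0.8$ and $\rho=0.2$ and interpose an explicit garbling kernel $K$ that maps the pure laws $N(m,1)$ and $N(0,1)$ to the two mixture laws $0.8\,N(m,1)+0.2\,N(0,1)$ and $0.2\,N(m,1)+0.8\,N(0,1)$ simultaneously; this yields the clean telescoping $r_{\mathrm{total}}\ge q_B+(q_A-q_B)=q_A$ and makes transparent exactly where the gap $0.8>0.2$ is used (nonnegativity of the ``keep'' weight $0.6$), while the $-1/n$ now comes from concentration of the number of rebels rather than from the conditioning correction. The trade-offs: your route needs the police to be allowed auxiliary randomness (permitted in the paper's framework, and in any case removable by averaging over $K$'s coins, as you note), whereas the paper's route is purely deterministic but needs the two auxiliary regimes $\rho=1$ and $\rho\to 0$ and the small conditioning argument. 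Your argument also generalizes verbatim to arbitrary thresholds $c_1>c_2$ in place of $0.8$ and $0.2$. One point worth stating explicitly if you write this up: the independence of agents' types and of the per-link noise is what makes a rebel's incoming vector an exact i.i.d.\ product, and Markov's inequality on $W$ requires no independence across rebels, so the lack of independence of different rebels' outputs (through shared neighbours) is harmless.
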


To compare with the simulations on the Quorum-Sensing protocol on the Facebook sub-network in Figure~\ref{fig:facebook-public}, we simulated this protocol on the same network, but under private communication instead of public communication. Confronting the rebels, we used the {\em Reverse police protocol}, which intuitively uses the rebel's decision protocol to decide when to arrest an agent. The results of the simulations are presented in Figure~\ref{fig:facebook-private}. As expected, the output-risk is the same as in the public communication model since the signals outgoing from an agent follow the same distribution in both models, and hence, the rebels' output follows the same distribution. For the same reason, the percentage of rebels which output ``many'' in case $\rho\geq 0.8$, and the success probability are the same as under the public communication model, as presented in Figure~\ref{fig:facebook-public}(a). However, the relative message-risk and, hence, the total risk of a rebel are significantly higher under private communication than the ones observed under public communication (see Figure~\ref{fig:facebook-public}(b)), despite the fact that the latter are obtained against any police protocol. Indeed, since the success probability is very close to zero when $\epsilon\leq 0.0684$ (inset in Figure~\ref{fig:facebook-public}(a)), the interesting cases are when $\epsilon>0.0684$. In this range, the total risk under private communication is at least $0.4$, which is about twice the total risk under the public communication model for the same range of $\epsilon$.

\begin{figure}[h]
\centering
\includegraphics[scale=0.25]{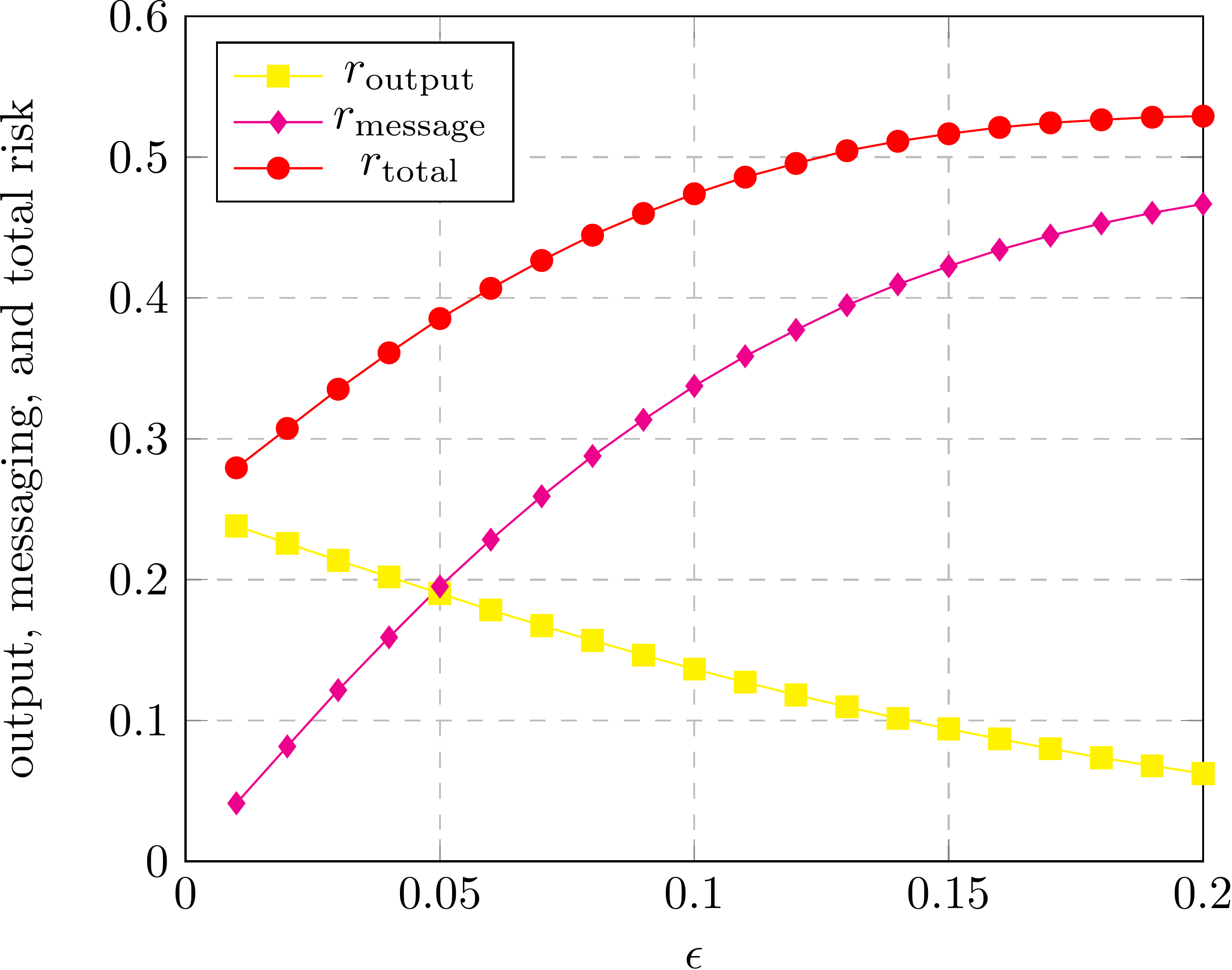}
\caption{\textbf{Private communication: The risk of the Quorum-Sensing protocol.} The plots depict the message, output, and total risks incurred by a rebel running the Quorum-Sensing protocol in the private communication model, when facing the Reverse police protocol. The simulations should be compared with the simulations shown in Figure~\ref{fig:facebook-public}(b).}
\label{fig:facebook-private}
\end{figure}

\subsection*{The {\em Self-Immolation} protocol}

As mentioned earlier, there is no deterministic uniform rebel protocol in the private communication model that is both efficient and secure. Nevertheless, we show that in this model, there exists a randomized uniform protocol, termed {\em Self-Immolation}, that achieves both high success probability and low risk. Importantly, however, the low risk is guaranteed only {\em on average}, and, in fact, the success probability of this protocol depends on few rebels that ``consciously sacrifice'' themselves, in the sense that they send messages that would clearly distinguish them from obedient citizens, and consequently put them at a very high risk. The proposed protocol is inspired by several historical events corresponding to the emergence of revolutions, including the self-immolation of Mohamed Bouazizi on 17 December 2010 relating to the Tunisian revolution, and the self-immolation of Th\'ich Quang Duc on 11 June 1963, related to the Buddhist crisis in Vietnam. 

The Self-Immolation protocol is as follows. Each rebel tosses a coin, and with probability $q$ sends a huge number $M=\infty$ to all of its neighbors; otherwise, with probability $1-q$, it sends the message~0. In turn, at the end of the communication round, a rebel $i$ outputs ``many'' if and only if (1) its degree is $\Delta_i\geq \Delta$ and (2) it sees more than $\tau\cdot \frac{\Delta_i}{\Delta}$ messages containing $M$, for some threshold~$\tau$. 

The next theorem states that in the private communication model, the Self-Immolation protocol achieves both high success probability and low risk. The formal proof is given in Section \ref{sec:analyze-self}. The intuition is that if a rebel sees a very large number then, since this event is so rare, there must be many rebels.

\begin{theorem}\label{thm:self-mmulation}Consider  the Self-Immolation protocol with $q=c\log n/\Delta$, and $\tau= (c\log n)/2$ for a sufficiently large constant $c$. Consider a network such that $\Delta\gg \log n$. Then, the average risk is $O(\log n/\Delta)$ and the success probability is $1-O(1/n^2)$.
\end{theorem}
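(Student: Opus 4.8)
The plan is to prove the two assertions—low (average) risk and high success probability—separately, each reducing to a Chernoff estimate for the binomial variable $X_i\sim\mathrm{Bin}(\Delta_i,\rho q)$ counting the ``$M$-signals'' received by a rebel $i$, together with one statistical-indistinguishability observation for the message-risk. Since $M=\infty$, an incoming message ``contains $M$'' if and only if the sending neighbor is a rebel who flipped heads, so noise is irrelevant to this count and $X_i$ is exactly this binomial, each neighbor contributing independently with probability $\rho q$.

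For the risk, split $r_{\mathrm{total}}=r_{\mathrm{message}}+r_{\mathrm{output}}$. For $r_{\mathrm{message}}$: a rebel that drew the probability-$(1-q)$ outcome sends $0$ to all neighbors and is \emph{statistically identical} to an obedient citizen of the same degree—conditioned on sending $0$, the $\Delta_i$ signals the police receives from it are i.i.d.\ $N(0,1)$ and independent of everything else, exactly as for a citizen, and rebel status is independent across agents. Hence, for every police protocol and agent $i$, $\Pr[i\text{ arrested}\mid i\text{ rebel}]=q\,\Pr[\text{arrest}\mid\text{sent }M]+(1-q)\Pr[i\text{ arrested}\mid i\text{ citizen}]\le q+\Pr[i\text{ arrested}\mid i\text{ citizen}]$, so the relative message-risk is at most $q=c\log n/\Delta=O(\log n/\Delta)$; this is the mechanism of Lemma~\ref{thm:safety}, only simpler, since here the camouflaged rebels coincide \emph{exactly} with citizens. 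For $r_{\mathrm{output}}$ (case $\rho\le 0.2$): fix a rebel $i$ with $\Delta_i\ge\Delta$ (it never outputs otherwise), let $T=\tau\Delta_i/\Delta$ be its threshold, and note $\mathbb{E}[X_i]=\Delta_i\rho q=2\rho T$, so $T\ge 2.5\,\mathbb{E}[X_i]$ because $\rho\le 0.2$ (independently of $\Delta_i$). The bound $\Pr[X_i\ge T]\le e^{-\mathbb{E}[X_i]}\bigl(e\,\mathbb{E}[X_i]/T\bigr)^{T}=\bigl(2\rho e^{1-2\rho}\bigr)^{T}$ then applies, and since $\rho\mapsto 2\rho e^{1-2\rho}$ is increasing on $(0,\tfrac12)$ it is at most $0.4\,e^{0.6}<0.73$ for all $\rho\in(0,0.2]$; thus $\Pr[X_i\ge T]\le 0.73^{\,T}\le 0.73^{\,\tau}=n^{-\Omega(c)}=O(1/n^3)$ for $c$ large. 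So every rebel has $r_{\mathrm{output}}=O(1/n^3)$ and $r_{\mathrm{total}}\le q+O(1/n^3)=O(\log n/\Delta)$ (using $\Delta<n$), which a fortiori bounds the average risk.

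For the success probability (case $\rho\ge 0.8$) I would intersect two high-probability events. (i) Let $R$ be the rebel set and $R_\Delta\subseteq R$ the rebels of degree $\ge\Delta$. As $\Delta$ is the median degree, at least $n/2$ agents have degree $\ge\Delta$, and since rebel status is independent of degree, $|R_\Delta|\sim\mathrm{Bin}(n_{\ge\Delta},\rho)$ with $n_{\ge\Delta}\ge n/2$ and $|R|\sim\mathrm{Bin}(n,\rho)$; a Chernoff bound with a small constant slack (recall $\rho n=\Omega(n)$) gives $|R_\Delta|\ge|R|/3$ except with probability $e^{-\Omega(n)}$. (ii) For each $i\in R_\Delta$, $\mathbb{E}[X_i]=2\rho T\ge 1.6\,T$ (now $\rho\ge 0.8$), so $T\le 0.625\,\mathbb{E}[X_i]$ and the lower-tail Chernoff bound gives $\Pr[X_i\le T]\le e^{-\Omega(\mathbb{E}[X_i])}=e^{-\Omega(c\log n)}=n^{-\Omega(c)}$, using $\mathbb{E}[X_i]=\Delta_i\rho q\ge 0.8\,c\log n$; a union bound over the $\le n$ such rebels shows that with probability $1-n^{-\Omega(c)}$ every $i\in R_\Delta$ sees more than $T$ of the $M$-signals and hence outputs ``many''. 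Taking $c$ large enough, on the intersection of (i) and (ii) at least $|R_\Delta|\ge|R|/3$ rebels output ``many'', so the success probability is $1-O(1/n^2)$.

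The only step requiring genuine (if shallow) care is the $r_{\mathrm{output}}$ estimate: a naive ``$X_i$ exceeds $2.5$ times its mean, so this is unlikely'' argument degrades as $\rho\to 0$, since $\mathbb{E}[X_i]$ can be as small as $\Theta((\log n)/n)$—the fix is to keep the \emph{absolute} threshold $T\ge\tau=c(\log n)/2$ in the exponent and to verify that $2\rho e^{1-2\rho}<1$ uniformly over $\rho\le 0.2$. Everything else is routine bookkeeping of Chernoff and union bounds, together with fixing $c$ large enough that every $n^{-\Omega(c)}$ term is below $n^{-3}$; the hypothesis $\Delta\gg\log n$ is used only to ensure $q=c\log n/\Delta<1$, so that the protocol and the binomial description of $X_i$ make sense.
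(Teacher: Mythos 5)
Your proposal is correct and follows essentially the same route as the paper: the relative message-risk is bounded by $q$ because a rebel that draws the probability-$(1-q)$ branch is distributionally identical to an obedient citizen, and the output-risk and success probability both reduce to Chernoff bounds on the binomial count $X_i\sim\mathrm{Bin}(\Delta_i,\rho q)$ of received $M$-messages, followed by union bounds. Your treatment is in fact slightly more careful than the paper's in two minor spots (verifying uniformity of the upper-tail bound over all $\rho\le 0.2$ via monotonicity of $2\rho e^{1-2\rho}$, and comparing $|R_\Delta|$ to $|R|/3$ rather than to $n/3$), but these are refinements of the same argument, not a different approach.
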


\subsection{Related works}

In general, the ability of rebels to perform covert computations  depends, among other things, on both their own computational abilities and the ones of the surveilling entity. Intuitively, the more powerful the surveilling entity is the more similar to ordinary civilians rebels should appear. When the surveilling entity is unlimited in its computational power, the tools to analyze the ability of rebels to covertly do computations naturally point to information theory. Of particular relevance is the recently introduced notion of {\em information complexity} \cite{braverman2013information,ma2011some,ma2013infinite,filmus2019information}, which quantifies the amount of information that the communication reveals regarding the inputs of the players to each other, or to an external observer. In \cite{braverman2013information}, Braverman et al.~showed that the information complexity of the AND function in a two-party multi-round setting, assuming that errors are not allowed, is $1.4923$ bits. This result appear to suggest that if the surveilling entity is unlimited in its computational power, then highly covert computation with small errors in a two-party scenario cannot be achieved even in a multi-round setting.

Secure computation under cryptographic assumptions was introduced by Yao \cite{yao1982protocols}. By now, there is a huge body of literature on secure computations, in both two-party scenarios and multi-party scenarios, including the case of tolerating a quorum of Byzantine players \cite{chen2007secure}, and making known distributed algorithms secure \cite{parter2019secure}. The concept of covert computation was introduced in \cite{von2005covert} for two-party scenarios and in \cite{chandran2007covert} for multi-party scenarios. The idea behind covert protocols is that parties do not know if other parties are also participating in the protocol or not. In general, however, most of schemes in the cryptography literature employ sophisticated operations on both the encoding and the decoding parts. While such operations can be implemented by computers, they cannot be expected to be employed directly by humans or other biological entities. 

Counting events occurring at different places in a network, and, in particular, estimating the population size are fundamental problems in distributed computing \cite{afek1996local,emek2011new,kuhn2010distributed,DBLP:conf/podc/EmekK10,beauquier2015space,izumi2014space}. Here, we are mostly interested in simple estimation protocols, based on sampling and sensing a certain tendency. Such mechanisms are natural for biological entities. For example, similar protocols are executed by bacteria communities aiming to identify when their density passes a certain threshold \cite{quorum1} (see more details in Section~\ref{sec:discussio}).

\section{Analysis of the Quorum-Sensing protocol}\label{sec:success}

The goal of this section is to prove Theorem \ref{thm:success}. Towards this, we first obtain the following lemma that establishes the low relative message-risk of the Quorum-Sensing protocol. Intuitively, its proof is based on first showing that the ``distance'' between the distributions of the signals whose mean is $\epsilon$ (corresponding to the messages out-going from a rebel) versus those whose mean is 0 (corresponding to the messages out-going from an obedient citizen) is small, and then deducing that any police protocol that attempts to distinguish them must make many mistakes. In order to prove this we rely on concepts and techniques adopted from the area of {\em statistical hypothesis testing}. 

\begin{lemma}\label{thm:safety}
Consider the public communication model, and assume that each rebel announces the message~$\epsilon>0$. Any police protocol that arrests a rebel with probability at least~$\delta$, must arrests an obedient citizen with probability at least~$\delta-\epsilon/\sqrt{2}$. In other words, the relative message-risk of a rebel is at most $\epsilon/{\sqrt{2}}$. This holds with respect to any underlying graph.
\end{lemma}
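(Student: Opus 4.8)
The plan is to reduce the lemma to a statement about the total variation distance between two Gaussians. In the public model, agent $i$ announces a single message, so the police sees exactly one signal $s_i = m_{i} + N(0,1)$ associated with $i$, and by the model the arrest decision for $i$ depends only on this one signal. Hence a police protocol, restricted to agent $i$, is nothing more than a (possibly randomized) decision rule $\phi:\mathbb{R}\to[0,1]$, where $\phi(x)$ is the probability of arresting $i$ given that the observed signal equals $x$. If $i$ is a rebel then $s_i\sim N(\epsilon,1)$, and if $i$ is an obedient citizen then $s_i\sim N(0,1)$. Writing $g_\mu$ for the density of $N(\mu,1)$, the difference between the two arrest probabilities is $\int_{\mathbb{R}}\phi(x)\bigl(g_\epsilon(x)-g_0(x)\bigr)\,dx$, which for any $\phi$ valued in $[0,1]$ is at most $\int_{\mathbb{R}}\bigl(g_\epsilon(x)-g_0(x)\bigr)^{+}\,dx = d_{\mathrm{TV}}\!\bigl(N(\epsilon,1),N(0,1)\bigr)$, with equality for the likelihood-ratio (here: threshold) test $\phi=\mathbf{1}[g_\epsilon\ge g_0]=\mathbf{1}[x\ge\epsilon/2]$. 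Crucially, this bound does not depend on $\Delta_i$, and hence not on the underlying graph, since in the public model every agent emits exactly one signal.

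It then remains to show $d_{\mathrm{TV}}\!\bigl(N(\epsilon,1),N(0,1)\bigr)\le \epsilon/\sqrt{2}$. The cleanest route is Pinsker's inequality, $d_{\mathrm{TV}}(P,Q)\le\sqrt{\tfrac12 D_{\mathrm{KL}}(P\Vert Q)}$, together with the standard identity $D_{\mathrm{KL}}\!\bigl(N(\epsilon,1)\Vert N(0,1)\bigr)=\epsilon^{2}/2$, which already gives the stronger bound $d_{\mathrm{TV}}\le\epsilon/2\le\epsilon/\sqrt{2}$. Alternatively one can argue directly: the two densities cross at $x=\epsilon/2$, so $d_{\mathrm{TV}}=\Pr[N(0,1)\le\epsilon/2]-\Pr[N(0,1)\le-\epsilon/2]=\Pr[\,|N(0,1)|\le\epsilon/2\,]$, and bounding this probability by the maximal value $1/\sqrt{2\pi}$ of the standard normal density times the interval length $\epsilon$ yields $d_{\mathrm{TV}}\le\epsilon/\sqrt{2\pi}\le\epsilon/\sqrt{2}$. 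Either estimate suffices; I would likely present the direct one since it keeps the ``statistical hypothesis testing'' flavor self-contained.

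Combining the two parts, for any police protocol that arrests a rebel with probability at least $\delta$, the probability of arresting an obedient citizen is at least $\delta - d_{\mathrm{TV}}\!\bigl(N(\epsilon,1),N(0,1)\bigr)\ge \delta-\epsilon/\sqrt{2}$, which is exactly the claimed inequality, and the relative message-risk is at most $\epsilon/\sqrt{2}$ by definition. The only genuinely non-routine point is the reduction in the first paragraph: justifying that a worst-case police strategy may be taken to be a per-signal likelihood-ratio test, so that the relevant quantity is precisely a total variation distance. I expect the write-up to spend most of its effort there — in particular handling randomized police strategies and noting that, although the police's decisions about different agents need not be independent, they also do not interact, since each agent's arrest decision depends only on that agent's own signal. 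Once that reduction is in place, the Gaussian estimate is textbook.
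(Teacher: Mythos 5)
Your proposal is correct and follows essentially the same route as the paper: reduce the police's advantage to the total variation distance between $N(\epsilon,1)$ and $N(0,1)$ (the paper packages this step as the Neyman--Pearson lemma, you derive it directly), and then bound that distance via Pinsker's inequality and the identity $\mathrm{KL}(N(0,1),N(\epsilon,1))=\epsilon^2/2$. The only difference is that the paper uses the weaker form $\mathrm{TV}\leq\sqrt{\mathrm{KL}}$ to land exactly on $\epsilon/\sqrt{2}$, whereas both of your estimates ($\epsilon/2$ and $\epsilon/\sqrt{2\pi}$) are in fact sharper than the stated bound.
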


\begin{proof}
Recall that each obedient citizen sends the number 0 which is then received as a signal $a\sim N(0,1)$, distributed normally with mean 0 and variance 1. A rebel sends  the message $\epsilon$ which will in turn be received as a signal $b\sim N(\epsilon,1)$. Our goal is to prove that any police protocol that arrests a rebel with probability at least~$\delta$, must arrest an obedient citizen with probability at least~$\delta-\epsilon/\sqrt{2}$. To this aim, we will, intuitively, show that the ``distance'' between the signal distribution of an obedient citizen and that of a rebel is small, and then deduce that any police protocol that attempts to distinguish them must make many mistakes.

To formalise the notion of distance between two distributions, we use two measures which are standard in the field of statistical hypothesis testing. Given two distributions $P_0$ and $P_1$ supported on the real numbers, we define the following.
\begin{itemize}
\item {\em Total variation distance}
\[\mathrm{TV}(P_0,P_1) :=  \frac 12 \int_{-\infty}^{\infty}|p_0(x) - p_1(x)|dx
\]

\item{\em Kullback-Leibler divergence}
\[
\mathrm{KL}(P_0,P_1) := \int_{-\infty}^{\infty} p_0(x) \log \frac{p_0(x)}{p_1(x)}dx
\]
\end{itemize}
Here, $p_i(x)$ stands as a shorthand for $P_i(X=x)$, for $i=0,1$, and the logarithm is the natural one.

A configuration of messages $C$ is a vector of messages, $C=(m_1,m_2,\ldots,m_n)$, where $m_i$ is the message sent by agent $i$. Consider any police protocol $g$. Being liberal, we assume that the police can decide whether to arrest an individual $i$, based on all signals it receives from all agents, and not only on the one corresponding to the outgoing message from $i$. Hence, $g$ can be viewed as a (possibly probabilistic) function $g:R^n\rightarrow\{0,1\}^n$, deciding for each possible set of signals whether to arrest agent $i$, with $1\leq i\leq n$. Fix an agent $i$ and fix the messages $C_{\neq i}=(m_1,m_2,\ldots,m_{i-1},m_{i+1},\ldots, m_n)$ sent by all other agents. Let $g_{i,C_{\neq i}}:R\rightarrow\{0,1\}$ be the restriction of $g$ to agent $i$, given that all others sent $C_{\neq i}$. For brevity, we omit the subscripts $i$ and $C_{\neq i}$. Let $X$ be a random variable representing the value of the signal received by the police corresponding to the message sent by agent $i$, i.e., $X$ is drawn from either $P_0$ or $P_1$. If the police believes that $X$ corresponds to a message sent by an obedient citizen, then $g(X)=0$ (and the sender $i$ is not arrested), and if it believes that $X$ corresponds to a message sent by a rebel, then $g(X)=1$ (and $i$ is arrested). Our goal is to show that $g$ necessarily makes many mistakes. 

The next lemma shows that if the total variation distance between $P_0$ and $P_1$ is small, then any police function $g$ can distinguish obedient citizens from rebels only with a very small probability.

\begin{lemma}[{{Neyman-Pearson~\cite[Lemma $4.3$ and Proposition $4.4$]{rigollet2015high}}}]
\label{lem:neyman}
Let $P_0$ and $P_1$ be two distributions with support $R$. Let $X$ be a random variable drawn from either $P_0$ or $P_1$. Consider a (possibly probabilistic) mapping $g:R\rightarrow\{0,1\}$ that attempts to ``guess'' whether the observation $X$ was drawn from $P_0$ (in which case it outputs 0) or from $P_1$ (in which case it outputs 1). Then, we have that
\[
P_{0}\left(g(X)=1\right) + P_{1}\left(g(X)=0\right)\geq 1-\mathrm{TV}(P_0,P_1),
\]
where $P_{i}\left(g(X)=j\right)$ stands for $P(g(X)=j\mid X\in P_i)$, for $i,j\in\{0,1\}$.
\end{lemma}

Lemma \ref{lem:neyman} bounds the probability of both kinds of ``wrong'' police actions: the arrest of an obedient citizen, and the non-arrest of a rebel. Our goal now is to bound from above the total variation distance of the two distributions considered. To this end, we use the next connection between the total variation distance and the Kullback-Leibler divergence.

\begin{lemma}[{{Pinsker~\cite[Lemma $4.8$]{rigollet2015high}}}]
\label{lem:pinsker}
For any two distributions $P_0$ and $P_1$,
\[
\mathrm{TV}(P_0,P_1) \leq \sqrt{\mathrm{KL}\left(P_0,P_1\right)}.
\]
\end{lemma}
As, in our case, $P_0$ and $P_1$ are normal distributions, computing their Kullback-Leibler divergence is easy:
\[
\mathrm{KL}(P_0,P_1) = \int_{-\infty}^{\infty} \frac{1}{\sqrt{2\pi}}\exp(-x^2/2)\log \frac{\exp(-(x-\epsilon)^2/2)}{\exp(-x^2/2)} dx = \epsilon^2/2. 
\]

By Lemma~\ref{lem:pinsker}, it follows that $\mathrm{TV}(P_0,P_1)\leq \epsilon/\sqrt 2$.
	
Fix an execution of the protocol, and the messages of all players. For a signal received by the police from an agent, Lemma~\ref{lem:neyman} implies that the probability of mistake by the police is at least $1-\epsilon/\sqrt 2$, i.e., $P_{0}\left(g(X)=1\right)+P_{1}\left(g(X)=0\right)\geq 1-\epsilon/\sqrt 2$. If the police arrests a rebel with probability at least~$\delta$, i.e., $P_{1}\left(g(X)=1\right) \geq\delta$, then $P_{1}\left(g(X)=0\right)\leq1-\delta$. This implies that $P_{0}\left(g(X)=1\right)\geq\delta-\epsilon/\sqrt 2$, i.e., the police arrests an obedient citizen with probability at least~$\delta-\epsilon/\sqrt 2$. This completes the proof of Lemma \ref{thm:safety}.
\end{proof}

We are now continue with the proof of Theorem \ref{thm:success}. As stated in Lemma \ref{thm:safety}, for a very small $\epsilon>0$, the rebel messaging protocol incurs very small relative message-risk. To complete the proof of Theorem \ref{thm:success} we show that the total-risk is also very small, and that, nevertheless, the effectiveness of the Quorum-Sensing protocol is extremely high. The proof is based on simple applications of Chernoff's inequality.

Consider a rebel $u$ with degree $\Delta_u$. We first aim to estimate the output-risk of $u$, that is, $r_\mathrm{output}$. If $\Delta_u<\Delta$, then $u$ does not output anything and, hence, $r_\mathrm{output}=0$. Let us therefore assume that  $\Delta_u\geq \Delta$. Denote by $\bar{s}=\sum s_i/\Delta_u$  the average value of the signals $u$ receives. Note that $\bar s$ is a random variable drawn from $N(\mu,\sigma^2)$, where $\mu=\rho\epsilon$ and $\sigma^2=1/\Delta_u$. By applying the Chernoff bound for a normal distribution, we have that, for any $\rho\leq0.2$, the output-risk is
\[
r_\mathrm{output} = P\left(\bar{s}\geq\epsilon/2\right)\leq P({\bar s-\mu}\geq 3\epsilon/10) \leq P\left({\mid \bar s-\mu\mid}\geq\sigma\frac{3\epsilon\sqrt{\Delta_u}}{10}\right)\leq 2e^{-\frac{9\Delta_u\epsilon^2}{200}}.
\]
By Lemma \ref{thm:safety}, the relative message-risk $r_{message}$ is at most $\epsilon/\sqrt{2}$. Hence, the total risk is
\[
r_\mathrm{total}=r_\mathrm{output}+r_\mathrm{message}\leq\frac{\epsilon}{\sqrt{2}}+2e^{-\frac{9\Delta_u\epsilon^2}{200}}\leq\frac{\epsilon}{\sqrt{2}}+2e^{-\frac{9\Delta\epsilon^2}{200}}.
\]
Next, let us calculate the success probability. For this purpose, we need to lower bound the probability that at least a third of the rebels output ``many'' when $\rho\geq 0.8$. Consider a rebel $u$ with degree at least $\Delta$. By applying the Chernoff bound for a normal distribution, we have that, for any $\rho\geq0.8$, the probability that $u$ does not output ``many'' is
\[
P\left(\bar{s}<\epsilon/2\right) \leq P(\mu-\bar s\geq 3\epsilon/10)\leq P\left(\mid \bar s-\mu\mid\geq \sigma\frac{3\sqrt{\Delta_u}\epsilon}{10}\right) \leq 2e^{-\frac{9\Delta_u\epsilon^2}{200}}\leq 2e^{-\frac{9\Delta\epsilon^2}{200}}.
\]
By a union bound argument, the probability that all rebels with degree at least $\Delta$ output ``many'' is at least $1-2ne^{-9\Delta\epsilon^2/200}$.  Since $\rho\geq 0.8$, we have that, for $n$ sufficiently large, with high probability (namely, at least $1-1/n^2$), the number of rebels with degree at least $\Delta$  is at least $n/3$. By using again a union bound argument, we have that the success probability is at least $1-2ne^{-9\Delta\epsilon^2/200}-1/n^2$.
	
For $\Delta$ sufficiently large, we have that $e^{-9\Delta\epsilon^2/200}\leq1/n^3$. Hence, the success probability is at least $1-O(1/n^2)$, and the total risk is at most $0.715\epsilon$, for $n$ sufficiently large. This concludes the proof of Theorem \ref{thm:success}.

\section{Analysis of the Median protocol}\label{sec:median}
The goal of this section is to prove Theorem~\ref{thm:median}.

As in the proof of Lemma~\ref{thm:safety}, let $P_0$ (respectively, $P_1$) denote the normal distribution with mean 0 (respectively, $\epsilon$) and variance 1. Recall that each obedient citizen sends to any other agent a signal drawn from $P_0$, while a rebel sends to any other agent a signal drawn from $P_1$. In the following, we say that a signal is {\em high} if it is above $\epsilon$, and, for any two agents $u$ and $v$, we denote by $\chi_{v\rightarrow u}$ the binary random variable such that $\chi_{v\rightarrow u}=1$ if and only if the signal sent by agent $v$ and received from agent $u$ is high. Clearly, we have that, if $v$ is a rebel, then $P(\chi_{v\rightarrow u}=1)=1/2$, while, if $v$ is an obedient citizen, then $P(\chi_{v\rightarrow u}=1)=\psi(\epsilon)$, where $\psi(z)=P(x\geq z)$ is the tail distribution of $P_0$. Let $\overline{\chi_u}$ be the random variable denoting the number of high signals received by an agent $u$, that is, $\overline{\chi_u}=\sum_{(v,u)\in E(G)} \chi_{v\rightarrow u}$. The expected value $ E[\overline{\chi_u}]$ of $\overline{\chi_u}$ satisfies the following inequalities:
\begin{equation}\label{eq:exp-lb}
E[\overline{\chi_u}]\geq\left((1/2)\rho + \psi(\epsilon)(1-\rho)-o(1)\right)\Delta_u,
\end{equation}
and
\begin{equation}\label{eq:exp-ub}
E[\overline{\chi_u}]\leq\left((1/2)\rho + \psi(\epsilon)(1-\rho)+o(1)\right)\Delta_u,
\end{equation}
where the term $o(1)$ accounts for the uncertainly that results from the behavior of undercover agents. A good approximation of the cumulative function of the normal distribution~\cite{Polya45,Aludaat08} implies the following upper and lower bounds for the tail distribution:
\begin{equation}\label{eq:lower-bound}
\psi(\epsilon)\geq\frac{1-\sqrt{1-\frac{1}{e^{\sqrt{\frac{\pi}{8}} \epsilon^2}}}}{2}-\frac{1}{500}>\frac{1}{2}\left(1-2\epsilon\right),
\end{equation}
and
\begin{equation}
\psi(\epsilon)\leq\frac{1-\sqrt{1-\frac{1}{e^{\sqrt{\frac{\pi}{8}} \epsilon^2}}}}{2}+\frac{1}{500} < \frac{1}{2}\left(1-\frac{2}{3}\epsilon\right),
\label{eq:upper-bound}
\end{equation}
where the second inequalities holds when $\epsilon\in[0.04,0.2]$. Equations~\ref{eq:exp-lb},~\ref{eq:exp-ub},~\ref{eq:lower-bound}, and~\ref{eq:upper-bound} imply that

\[
\label{eq:exp_many}
E\left[\overline{\chi_u}\mid\rho\geq 0.8\right] \geq \left(\frac{2}{5}+\frac{1}{5}\psi(\epsilon)-o(1)\right)\Delta_u\nonumber > \left(\frac{1}{2}-\frac{\epsilon}{5}-o(1)\right)\Delta_u,
\]
and that
\[
\label{eq:exp_few}
E\left[\overline{\chi_u}\mid\rho\leq 0.2\right] \leq \left(\frac{1}{10}+\frac{4}{5}\psi(\epsilon)+o(1)\right)\Delta_u\nonumber < \left(\frac{1}{2}-\frac{4}{15}\epsilon+o(1)\right)\Delta_u.
\]
We define $\varphi(\epsilon)=7\epsilon/30$ and $f(\epsilon)=1/2-\varphi(\epsilon)$ (note that $4\epsilon/15>\varphi(\epsilon)>\epsilon/5$). Let us now calculate the output-risk of a rebel $u$. This corresponds to the probability of having a false-positive mistake, that is, that $u$ outputs ``many'' even though $\rho\leq 0.2$. Let us therefore condition on having $\rho\leq 0.2$. If $\Delta_u<\Delta$ then the rebel $u$ does not output anything and hence there is no output-risk. Hence, aiming to given an upper bound on the output-risk, we may assume without loss of generality that $\Delta_u\geq\Delta$. In this case, agent $u$ outputs ``many'' if  $\overline{\chi_u}>f(\epsilon)\Delta_u$. We show next that for $\Delta$ sufficiently large, this happens with probability at most $e^{-0.015(1-2\epsilon)\epsilon^2\Delta}$. Indeed, for any $\rho\leq0.2$, the following holds: $P\left(\overline{\chi_u}>f(\epsilon)\Delta_u\right)<e^{-0.015(1-2\epsilon)\epsilon^2\Delta}$.

Hence, the output-risk is at most $e^{-0.015(1-2\epsilon)\epsilon^2\Delta}$. By Lemma  \ref{thm:safety}, the relative message-risk is $r_\mathrm{message}\leq \epsilon/\sqrt{2}$, hence, the total risk is
\[
r_\mathrm{total}=r_\mathrm{output}+r_\mathrm{message}\leq\frac{\epsilon}{\sqrt{2}}+e^{-0.015(1-2\epsilon)\epsilon^2\Delta}.
\]

Next, let us calculate the success probability. For this purpose, we need to lower bound the probability that at least a third of the rebels output ``many'' when $\rho\geq 0.8$. Consider a rebel $u$ with degree at least $\Delta$, i.e., $\Delta_u\geq \Delta$. For $\Delta$ sufficiently large, the probability that this rebel does not output ``many'' given that $\rho\geq 0.8$ is at most $e^{-0.036\epsilon^2\Delta}$. Indeed, for any $\rho\geq0.8$, the following holds: $P\left(\overline{\chi_u}<f(\epsilon)\Delta_u\right)<e^{-0.036\epsilon^2\Delta}$.

By a union bound argument, the probability that all rebels with degree at least $\Delta$ output ``many'' is at least $1-ne^{-0.036\epsilon^2\Delta}$. Since $\rho\geq 0.8$, we have that, with high probability, the number of rebels with degree at least $\Delta$  is at least $n/3$. By using again a union bound argument, we have that the success probability is at least $1-ne^{-0.036\epsilon^2\Delta}+1/n^2$.
	
Note that for $n$ and $\Delta$ sufficiently large, we get that the success probability is at least $1-O(1/n^2)$, and the total risk is at most $0.715\epsilon$, as stated in the theorem. This completes the proof of Proof of Theorem~\ref{thm:median}.

\section{An impossibility result under private communication}\label{sec:impossibility}
The goal of this section is to prove Theorem \ref{thm:impossibility}.

Consider the private communication framework in a regular network of degree $\Delta$ and size $n$, and let $\mathcal{R}$ be a deterministic uniform rebel protocol. Recall that in such a protocol, each rebel sends the same message $m$ to all its neighbors. Assume that the success probability of $\mathcal{R}$ is at least $p$, for some constant $p>0$ and for sufficiently large $n$ and $\Delta$.

We now design a specific police protocol $\mathcal{P}$, called the {\em Reverse police protocol} that confronts rebels with their own detection protocol. That is, in order to decide whether or not to arrest agent $u$, the reverse police protocol $\mathcal{P}$ does the following. First, it collects the set $\Lambda_u^{\rightarrow}$ of $\Delta$ signals corresponding to the $\Delta$ messages outgoing from $u$. All these signals are samples drawn from the same normal distribution, that is, either $N(0,1)$ or $N(m,1)$ depending on whether $u$ is an obedient citizen or a rebel. Note also that the signal in $\Lambda_u^{\rightarrow}$ viewed by the police, corresponding to the message sent by $u$ to one of its neighbor $v$ is not equal to the corresponding signal received by $v$; however, the two signals are both samples drawn from the same normal distribution (once again, either $N(0,1)$ or $N(m,1)$). The police protocol $\mathcal{P}$ then simulates, for each agent $u$, the rebel output decision protocol assuming that the incoming signals of $u$ are the ones in $\Lambda_u^{\rightarrow}$. In other words, $\mathcal{P}$ simulates the scenario in which agent $u$ is a rebel that receives from each of its neighbors $v$ the signal in $\Lambda_u^{\rightarrow}$ corresponding to the message sent by $u$ to $v$. If the simulation outputs ``many'' then the reverse police arrests agent $u$.  

In order to analyze the total risk $r_\mathrm{total}=r_\mathrm{output}+r_\mathrm{message}$ of a rebel, we will show that the message-risk of a rebel is at least $p/4$ by analyzing the case in which all agents are rebels (that is, $\rho=1$), and that the message-risk of an obedient citizen is at most $r_\mathrm{output}+1/n$ by analyzing the case in which almost all agents are obedient (for example, $\rho=1/(2n^2)$). Since the relative message-risk $r_\mathrm{message}$ of a rebel is the difference between the message-risk of a rebel and the message-risk of an obedient citizen, we have that $r_\mathrm{message}\geq p/4-r_\mathrm{output}-1/n$. Hence,  $r_\mathrm{total}\geq r_\mathrm{output}+p/4-r_\mathrm{output}-1/n=p/4-1/n$, and the theorem is thus proved. It then remains to prove the two bounds on the message-risk of a rebel and of an obedient citizen.

\subsection*{Bounding the message-risk of a rebel.}

Let us consider the case in which all agents are rebels, that is, $\rho=1$. In this case, by the correctness guarantee, with probability at least $p$, at least $1/3$ of the agents output ``many''. Let $\hat{p}_{n,\Delta}$ be the probability that a rebel outputs ``many'' (assuming a $\Delta$-regular network of size $n$ where all agents are rebels). Hence, the expected fraction of rebels that output ``many'' is $\hat{p}_{n,\Delta}$. Note that, since $\rho=1$, each rebel outputs ``many'' with probability $\hat{p}_{n,\Delta}$ independently of other agents. This is because each agent (being a rebel) sends the same message $m$ to each of its neighbors. The signals in the system are therefore independent samples taken from $N(m,1)$. The output decision of each rebel is based on its incoming signals which are independent from the incoming signals of other rebels. Because of the Markov inequality, we have that there exists some $n_0$ such that, for every $n>n_0$, $\hat{p}_{n,\Delta}\geq p/4$. Indeed, let $X_u$ be the random binary variable indicating whether a rebel $u$ outputs ``many'', and let $X=\sum X_u$. Then $\mu=E[X]=\sum E[X_u]=n\hat{p}_{n,\Delta}$. By contradiction, suppose that $\hat{p}_{n,\Delta}<p/4$. By applying Markov's inequality, we have that
\[P(X\geq n/3)\leq n\hat{p}_{n,\Delta}/(n/3)<3p/4<p,\]
contradicting the hypothesis that there exists some $n_0$ such that, for every $n>n_0$, the success probability is at least $p$. In what follows we hence assume $n>n_0$. 

The decision of rebel $u$ is based on the collection of received signals $\Lambda_u^{\leftarrow}$. As all agents are rebels, as mentioned, all signals follow the same distribution $N(m,1)$. Hence, the signals in $\Lambda_u^{\leftarrow}$ are samples drawn from the same distribution as the signals in $\Lambda_u^{\rightarrow}$ (that is, the set of signals collected by the police protocol). Since a rebel whose input is $\Lambda_u^{\leftarrow}$ outputs ``many'' with probability at least $p/4$, it also outputs ``many'' with probability at least $p/4$ when given as input the collection $\Lambda_u^{\rightarrow}$. By the definition of the Reverse police protocol $\mathcal{P}$, it follows that the police arrests a rebel with probability at least $p/4$. In other words, the message-risk of a rebel is  at least $p/4$. 

\subsection*{Bounding the message-risk of an obedient citizen.} 
Let us consider the case in which almost all agents are obedient citizens, that is, the case in which $\rho$ tends to zero. More precisely, assume that $\rho=1/(2n^2)$. In particular, for  $n\geq 2$, we have $\rho<0.2$. The event that there is at least one rebel in the network happens with positive probability. Consider such a rebel $r$. By the definition of output-risk, based on the set of its incoming signals $\Lambda_r^{\leftarrow}$, rebel $r$   outputs ``many'' with probability at most $r_\mathrm{output}$.  Let $A$ be the event that all the $\Delta$ neighbors of $r$ are obedient citizens. Then 
\[
r_\mathrm{output}\geq P(r \mbox{~outputs ~``many''})\geq P(r \mbox{~outputs ~``many''}\mid A) \cdot P(A). \]
By a union bound argument, the probability that at least one neighbor of $r$ is a rebel is at most $\Delta\rho$. Therefore, $P(A)\geq 1-\Delta\rho$. Together, we obtain:
\begin{equation}\label{eq_rebel_wrong}
\frac{r_\mathrm{output}}{1-\Delta\rho}\geq P(r \mbox{~outputs ~``many''}\mid A). 
\end{equation}
Next observe, that given $A$, the set of $\Delta$ incoming signals $\Lambda_r^{\leftarrow}$ follows the same distribution as  $\Lambda_u^{\rightarrow}$, namely, the set of $\Delta$ signals outgoing from an obedient citizen $u$. Hence, by the definition of $\mathcal{P}$, the police arrests an obedient citizen with probability $P(r \mbox{~outputs ~``many''}\mid A)$, which is, by Eq.\ref{eq_rebel_wrong}, at most \[\frac{r_\mathrm{output}}{1-\Delta\rho}<
r_\mathrm{output}\left(1+\frac{1}{n}\right)\leq r_\mathrm{output}+\frac{1}{n},
\]
for sufficiently large $n$. This completes the proof of Theorem \ref{thm:impossibility}.

\section{Analyzing the Self-Immolation protocol}\label{sec:analyze-self}

The goal of this section is to prove Theorem \ref{thm:self-mmulation}.

A rebel in the self-immolation protocol sends the message 0 with probability $1-q$, and hence, in this case it is indistinguishable from an obedient citizen. Thus, the expected relative message-risk of a rebel is at most $q$. 

Let us next estimate the output-risk of a rebel $u$. If the degree of $u$ is less than the median degree, i.e., $\Delta_u<\Delta$, then $u$ does not output anything and hence incurs an output-risk of zero. Let us therefore consider the case that  $\Delta_u\geq \Delta$. The probability that a neighbor $v$ of $u$ sends the message $M$ is $q\rho$ (that is, the probability that $v$ is a rebel times the probability that $v$ sends the message $M$). Let $X_u$ be the random variable denoting the number of neighbors of $u$ that send the message $M$. Hence, the expected value of $X_u$ is $\mu_u=q\rho\Delta_u$. In particular, if $\rho= 0.2$, then $\mu_u=(1/5)q\Delta_u$. Hence, using Chernoff's bound, the probability that a rebel outputs ``many'' given that $\rho\leq 0.2$ is
\[
P\left(X_u>\frac{c\Delta_u\log n}{2\Delta}\mid \rho\leq 0.2\right) \leq P\left(X_u>(1+3/2)\mu_u\mid \rho= 0.2\right) \leq e^{-\frac{3}{20}q\Delta} = e^{-\frac{3c\log  n}{20}}= O(1/n^3),
\]
for a sufficiently large constant $c$. This implies that the output-risk is $O(1/n^3)$, and the expected total risk is $O(q+1/n^3)=O(\log n/\Delta+1/n^3)=O(\log n/\Delta)$, as stated in the theorem.

Next, let us calculate the success probability. For this purpose, we need to lower bound the probability that at least a third of the rebels output ``many'' when $\rho\geq 0.8$. Consider a rebel $u$ with degree at least $\Delta$. Using similar arguments to the ones mentioned above, the probability that this rebel does not output ``many'' given that $\rho\geq 0.8$ is 
\[
P\left(X_u<\frac{c\Delta_u\log n}{2\Delta}\mid \rho\geq 0.8\right)=O(1/n^3).
\]
By a union bound argument, the probability that all rebels with degree at least $\Delta$ output ``many'' is at least $1-O(1/n^2)$. Since $\rho\geq 0.8$, we have that, with high probability (namely, at least  $1-O(1/n^2)$), the number of rebels with degree at least $\Delta$ is at least $n/3$. By using a union bound argument, we have that the success probability is at least $1-O(1/n^2)$.
This completes the proof of Theorem \ref{thm:self-mmulation}.

\section{Discussion}\label{sec:discussio}

This paper argues that the communication infrastructure can play a significant role in rebels' ability to estimate their fraction in the population securely. Our main takeaway message is that even under extreme surveillance conditions, there are simple deterministic protocols in the public communication model that allow rebels to estimate their fraction in the population while keeping a negligible risk of each rebel being identified as such. In light of these results, it may be interesting to revisit the emergence of past revolutions, especially in non-democratic countries. 
For example, a pivotal moment in the Romanian revolution was the botched public speech that Ceau\cb{s}escu gave on 21 December 1989. In the wake of growing social tension, Ceau\cb{s}escu conducted a speech before a crowd consisting of tens of thousands in Palace Square. Aiming to demonstrate the control of the leader, the speech was nationally televised to millions. The crowd were given orders on  when to applaud and what to chant, while secrete policemen were among the crowd making sure that everything is in order. Such a speech was conducted yearly, but this time something different happened. In the beginning of the speech, the crowd stayed quite when Ceau\cb{s}escu speaks, applauded the leader at intermediate pauses and chanted admiration songs. However, eight minutes into the speech, some sound began to arise from the crowd which became louder and louder until the crowd starting booing. Ceau\cb{s}escu and his wife Elena fled the scene by helicopter; A day after they  were captured, put on trial, and shot by a firing squad. It is unclear what dynamics led to the dramatic switch in the crowd's behavior, from completely submissive to rebellious. Among other factors, it appears plausible that despite the surveillance, rebels in the crowd managed to somehow understand that if they suddenly act rebelliously, sufficiently many others would join. This paper suggests that the public pattern of communication that is inherent to crowd assemblies, could have had a non-negligible contribution to the emergent changeover in the crowd's behavior.

The principle revealed in this paper is in fact not limited to overthrows of dictatorships by rebels and can be pertinent to other social movements in which the participating individuals prefer to remain covert. Interestingly, yet more speculatively, our setting may further find relevance in the microbiological world. Indeed, quorum-sensing mechanisms are known to be utilized by bacteria communities to identify when their density passes a certain threshold \cite{quorum1,quorum2,quorum3,quorum4,quorum5}, e.g., before attacking a host tissue \cite{quorum2}. Moreover, communication between bacteria follows a diffusion process of autoinducers, which is, in some sense, reminiscent of public communication. In the presence of the immune system, it is plausible that pathogenic bacteria act covertly, especially while being surrounded by non-pathogenic bacteria communities. In this context, our results may suggest that in order to perform the quorum-sensing covertly, such bacteria would avoid using distinct autoinducers in their signaling, and instead, use signals composed of a mixture of molecules types that are already used by nearby nonpathogenic bacteria, while slightly biasing their proportions. A supporting empirical evidence is the fact that several common species of bacteria, including {\em B. subtilis}, {\em V. harveyi}, and its pathogenic relative, {\em V. cholerae}, have been shown to utilize different combinatorial combinations of autoinducers which are used (either separately or in other combinations) by other bacteria \cite{quorum3,quorum4,Auchtung,Henke,even2016transient,even2016social,bridges2019intragenus}. Explanations for the use of multiple autoinducers have been given using arguments from evolutionary game theory \cite{even2016transient,eldar2011social}. The current paper suggests that this phenomena, and particularly the overlap in the autoinducer combinations, could also be explained in the context of covert communication.

\begin{acks}
This is a post-peer-review, pre-copyedit version of an article published in \textit{Scientific Reports}. The final authenticated version is available online at: https://doi.org/10.1038/s41598-022-07165-9. We are particularly thankful for Simon Collet for proposing the idea behind the impossibility result (Theorem \ref{thm:impossibility}) and for Lucas Boczkowski for suggesting the self-immolation protocol (Theorem \ref{thm:self-mmulation}). In addition, we thank Ami Paz, Emanuele Natale, Jonathan Korman, Christos Papadimitriou, and Ofer Feinerman for helpful discussions. Finally, we thank Avigdor Eldar for helpful discussions regarding quorum-sensing mechanisms in bacteria. This work has received funding from the European Research Council (ERC) under the European Union's Horizon 2020 research and innovation program (grant  agreement No 648032).
\end{acks}

\bibliographystyle{ACM-Reference-Format}
\bibliography{korman_crescenzi.bib}

\end{document}